\documentclass[aps,pra,10pt,a4paper,reprint,nofootinbib,superscriptaddress]{revtex4-2}
\usepackage[utf8]{inputenc}
\usepackage{amsfonts}
\usepackage{amssymb}
\usepackage{amsmath}
\usepackage{amsthm}
\usepackage{graphics}
\usepackage{graphicx}
\usepackage{braket}
\usepackage[colorlinks=true,linkcolor=blue,urlcolor=blue,citecolor=blue]{hyperref}
\usepackage{times,txfonts}
\newtheorem{theorem}{Theorem}
\newtheorem{corollary}{Corollary}

\newtheorem{proposition}{Proposition}
\newtheorem{remark}{Remark}

\theoremstyle{definition}
\newtheorem{example}{Example}
\newcommand{\ketbra}[2]{|#1\rangle \langle #2|}

\newcommand{\alex}[1]{{\leavevmode\color{blue}[AS: #1]}}

\begin{document}
\title{Entanglement battery and entanglement catalyst in local state discrimination problems}

\author{Saronath Halder}
\affiliation{Department of Physics, School of Advanced Sciences, VIT-AP University, Beside AP Secretariat, Amaravati 522241, Andhra Pradesh, India}

\author{Aby Philip}
\affiliation{Institute of Fundamental Technological Research, Polish Academy of Sciences, Pawi\'{n}skiego 5B, 02-106 Warsaw,
Poland}

\author{Alexander Streltsov}
\affiliation{Institute of Fundamental Technological Research, Polish Academy of Sciences, Pawi\'{n}skiego 5B, 02-106 Warsaw,
Poland}

\begin{abstract}
In this work, we study the limitations and advantages of using entanglement battery and entanglement catalyst in local state discrimination problems. We consider both cases of such tools, i.e., exact and approximate cases. We show that to distinguish any set of orthogonal pure bipartite entangled states perfectly under local operations and classical communication using an (exact) entanglement battery or an (exact) entanglement catalyst, it is necessary to consider that the cardinality of the set must be smaller than the total dimension of the given Hilbert space. Then, we construct a nontrivial case where an exact entanglement battery can provide huge advantage. We also construct other nontrivial cases where exact or approximate entanglement battery or entanglement catalyst can be useful. In fact, we find that the approximate tools are particularly useful in the local discrimination of certain sets which can be derived from many-copy indistinguishable ensembles. 
\end{abstract}
\maketitle

\section{Introduction}\label{sec1}
One of the main goals in quantum information science is to explore the usefulness of quantum resources \cite{Chitambar19} in information processing protocols. For example, it is well known that quantum entanglement \cite{Horodecki09, Guhne09} is very useful as a resource in several protocols, such as quantum dense coding \cite{Bennett92, Mattle96}, quantum teleportation \cite{Bennett93, Bouwmeester97}, quantum cryptography \cite{Ekert91}. Because of this usefulness, a task is considered to be particularly important if after beginning this with additional entanglement as a resource, it is possible to get back some entanglement at the end along with achieving the goal. In Ref.~\cite{Jonathan99}, the authors showed that when manipulating pure entangled states locally, it is possible to construct a task where the entangled resource state can be recovered unaffected along with achieving the goal of the task. This is a process known as entanglement catalysis \cite{Datta23}. The concept of catalysts has come from chemistry, where they enable certain reactions or increase the rates of the reactions. However, in the whole process of a reaction, the catalyst remains unchanged. Almost similarly, in quantum information theory, an entanglement catalyst advances our capability to accomplish a task by local operations and classical communication (LOCC), provided that there is no loss of entanglement in the catalyst qualitatively or quantitatively \cite{Ganardi25}. In this work, we discuss the limitations and advantages of using entanglement catalysts in local state discrimination problems. We also consider a more general version of an entanglement catalyst which is known as an entanglement battery \cite{Ganardi25} in the context of local state discrimination problems.

The setting of a local state discrimination problem \cite{Bennett99-1, Bennett99, Walgate00, Ghosh01, Walgate02, Ghosh02, DiVincenzo03, Horodecki03, Ghosh04, Fan04, Horodecki04, Nathanson05, Watrous05, Niset06, Hayashi06, Feng09, Bandyopadhyay10, Bandyopadhyay11, Yu12, Yang13, Zhang14, Zhang15, Xu16, Halder18, Halder19, Banik21} is described in the following way. We consider a composite quantum system. This system is prepared in a state that is secretly chosen from a given set, $\mathcal{S}$. The task is to identify the state of the system by discriminating between the states of $\mathcal{S}$. However, there is a constraint. The subsystems are distributed among spatially separated locations. Now, because of this spatial separation, the parties of those locations are only allowed to use LOCC. Ultimately, the local state discrimination problem boils down to a task in which the parties of different locations apply LOCC and try to identify the state of the composite quantum system by discriminating between the states of $\mathcal{S}$. For our work, we assume that the states of $\mathcal{S}$ are pairwise orthogonal to each other. Therefore, the parties are required to identify the state of the composite quantum system perfectly. If the parties are unable to identify the state of the quantum system perfectly under LOCC, then we say that the given set $\mathcal{S}$ is a locally indistinguishable set. Now, we are ready to provide a general setup for the definitions of entanglement battery and entanglement catalyst in local state discrimination problems.

We begin with a locally indistinguishable set $\mathcal{S}$ = $\{\ket{\Psi_i}\}_i$. For this set, to accomplish the task of state discrimination under LOCC, it is natural to consume an additional entangled resource state $\tau$ (it can be a pure state or a mixed state). See Refs.~\cite{Cohen08, Bandyopadhyay09, Yu14, Bandyopadhyay16, Halder18, Bandyopadhyay18, Zhang18, Rout19, Shi20} in this context. 
Now, before applying LOCC and $\tau$, we have a locally indistinguishable set $\mathcal{S}$, and after applying LOCC and $\tau$, we have the index $\{i\}$, that is, we have the information of the state in which the given system is prepared. However, after completing the state discrimination task, we also assume that we get back an entangled resource state $\tau^\prime$ with probability $p~(>0)$, represented by $p\tau^\prime$. This process can be presented through an equation given below. 
\small
\begin{equation}\label{eq1}
\left\{\tau,~\mathcal{S}\right\}\xrightarrow{\mbox{LOCC}}\left\{p\tau^\prime,~\{i\}\right\}.
\end{equation}
\normalsize
Here, $\{\tau,~\mathcal{S}\}$ is the collection of inputs that we have before the local protocol. Similarly, $\{p\tau^\prime,~\{i\}\}$ is the collection of outputs that we have after the local protocol. 
Furthermore, we consider $t$ instances of the same state discrimination problem together, that is, $\mathcal{S}^{\otimes t}$. This is basically copies of a locally indistinguishable set and it is entirely different from the copies of quantum states. So, in this case, the process can be presented through the following equation:
\small
\begin{equation}\label{eq2}
\begin{array}{l}
\left\{\tau^{\otimes t},~\mathcal{S}^{\otimes t}\right\}\xrightarrow{\mbox{LOCC}}\\[1 ex]
\left\{\{p^t{\tau^\prime}^{\otimes t},~tp^{t-1}(1-p){\tau^\prime}^{\otimes t-1},\dots\}~\{i_1\},~\{i_2\},\dots,\{i_t\}\right\},
\end{array}
\end{equation}
\normalsize
where with probability $p^t$ we get ${\tau^\prime}^{\otimes t}$, with probability $tp^{t-1}(1-p)$ we get ${\tau^\prime}^{\otimes t-1}$, etc. Here, $t$ instances of a discrimination task are treated adaptively and the probabilities $p^t$, $tp^{t-1}(1-p)$, etc.~are from a binomial distribution. The indices $\{\{i_1\},~\{i_2\},\dots,\{i_t\}\}$ indicate identification of states in different instances. This is a general setup required for our purpose. Next, we define the tools one by one. 

(i) For (\ref{eq1}), if $p=1$ and $\tau=\tau^\prime$, then it is a case of an {\it exact entanglement catalyst}. We note here that $\tau=\tau^\prime$ means there is local equivalence between the states $\tau$ and $\tau^\prime$ (both conversions from $\tau$ to $\tau^\prime$ and from $\tau^\prime$ to $\tau$ is possible under LOCC). (ii) A more general version of an exact catalyst emerges when the state $\tau^\prime$ is a better state compared to $\tau$ (along with $p=1$), that is, local conversion from $\tau^\prime$ to $\tau$ is possible deterministically but the opposite conversion is not possible, a process known as an {\it exact entanglement battery} \cite{Ganardi25}. (iii) For (\ref{eq2}), when $0<p\leq1$ and $t\geq1$ if we get back $\tau$ on average or we get back $\tau$ with a probability very close to one, then it is a case of an {\it approximate entanglement catalyst}. (iv) For (\ref{eq2}), when $0<p\leq1$ and $t\geq1$ if we get back $\tau^\prime$ (which is a better state compared to $\tau$ as defined in the above) on average or we get back $\tau^\prime$ with a probability very close to one, then it is a case of an {\it approximate entanglement battery}, a more general version of an approximate entanglement catalyst. The approximate version of an entanglement catalyst is adopted from \cite{Kondra21, Bartosik21, Datta24}. Nevertheless, in the context of local state discrimination problems, the entanglement catalysts are not explored much. So far, only a few results have been reported \cite{Yu12, Cosentino14, Sen22}. In fact, there is no known instance of an entanglement battery (exact or approximate) in the context of local state discrimination problems. 

In the following, we briefly discuss about entanglement measures. Then, we describe our results of exact and approximate cases of entanglement catalysts and entanglement batteries in the local state discrimination problems.

\section{Entanglement Measures}
In the `Results' section, we will be using the concept `Entanglement Measures'. Therefore, here we briefly discuss about it. Suppose, two entangled states are given. It might be the case that the states are not equally entangled. But how to compare between the amounts of entanglement contained in the states. One way to do this is to quantify entanglement. If we consider a function $E$ which can be treated as an entanglement measure, then it must output different values for the states which are not equally entangled. Nevertheless, there are certain properties which must be satisfied by $E$ to be a good entanglement measure. (i) If $\rho$ is separable, then $E(\rho)=0$. (ii) For certain entangled states, $E$ must output a maximum value. These states are maximally entangled states. For all other states, $E$ must output a value between 0 and that maximum value. If we consider a two-qubit system, then $0\leq E\leq1$. We mention that the scaling of $E$ can be fixed in a different way as well. In this context, one can have a look into \cite{Vidal02}. (iii) $E$ must be non-increasing (on average) under LOCC. For example, $E\left[\Lambda_{\text{LOCC}}(\rho)\right]\leq E(\rho)$, where $\Lambda_{\text{LOCC}}$ is any operation which can be implemented by LOCC. (iv) $E$ can be additive on tensor product, i.e., $E(\rho\otimes\sigma) = E(\rho) + E(\sigma)$, see \cite{Christandl04} in this context. There can be other properties as well. In particular, if we consider multipartite systems, then the definitions become increasingly complex. So, for further details, we refer to \cite{Horodecki09, Guhne09}. Entropic measures are quite popular to use as entanglement measures, for example, von Neumann entropy \cite{Horodecki09}. Since, here we use this quantity, we provide its definition briefly.

We consider a bipartite pure state $\ket{\psi}$ (shared between two parties A and B). The reduced density matrices can be defined as the following.
\begin{equation*}
\rho_{x} = \text{Tr}_{\bar{x}}[\ket{\psi}\bra{\psi}],
\end{equation*}
where $x = A$ when $\bar{x} = B$ and vice-versa. Then, the von Neumann entropy can be defined as
\begin{equation*}
S(\rho_x) = - \text{Tr}[\rho_x\log_2\rho_x].
\end{equation*}
Following the same way, one can also provide the aforesaid definition of von Neumann entropy for a bipartite mixed state, $\rho$ instead of considering $\ket{\psi}$. In that case, the reduced density matrices are \(\rho_{x} = \text{Tr}_{\bar{x}}[\rho]\).

Another interesting entropic quantity is the quantum conditional entropy \cite{Horodecki09} which is often used in the context of quantum state merging \cite{Horodecki_2006} and to detect entanglement. The definition of quantum conditional entropy can be given using von Neumann entropy:
\begin{equation*}
S(\rho_{A|B}) = S(\rho) - S(\rho_B),
\end{equation*}
where for a bipartite state $\rho$, the conditional entropy is the entropy of the conditional density matrix $\rho_{A|B}$, represented by $S(\rho_{A|B})$. $S(\rho)$ is the von Neumann entropy of the whole state $\rho$ and $S(\rho_B)$ is the von Neumann entropy of the reduced density matrix, $\rho_B$ as defined earlier. Note that in other way the terms of the above equation are also represented as $S(\rho_{A|B})\equiv S(A|B)_\rho$, $S(\rho)\equiv S(AB)_\rho$, and $S(\rho_B)\equiv S(B)_\rho$. 

We are now ready to present our main findings.

\section{Results}\label{sec2}
We start with an observation that the entanglement batteries (exact) or the entanglement catalysts (exact) are not useful in all cases of local state discrimination problems (LSDPs). Such a case is outlined below.

\subsection{A no-go theorem}
\begin{theorem}\label{theo1}
It is impossible to distinguish an orthonormal basis $\mathcal{B}$ in $\mathcal{H} = \mathbb{C}^{d_1}\otimes\mathbb{C}^{d_2}$ (in brief, $d_1\otimes d_2$) perfectly under LOCC using an entanglement battery or an entanglement catalyst, $\tau$ if $\mathcal{B}$ contains only entangled states.

\end{theorem}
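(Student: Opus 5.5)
Let the basis be $\mathcal{B}=\{\ket{\Psi_i}\}_{i=1}^{d_1d_2}$ and suppose, for contradiction, that there is an LOCC protocol $\Lambda$ which, acting on $\ket{\Psi_i}\otimes\tau$, identifies $i$ with certainty and returns $\tau^\prime$ with $p=1$. Here $\tau^\prime$ is either locally equivalent to $\tau$ (catalyst) or deterministically convertible to $\tau$ (battery). The strategy is an entanglement-monotone argument. Since LOCC is a subset of separable operations, I will bound the entanglement that survives in the resource after perfect discrimination of a complete basis, and show that it is strictly smaller than the entanglement of $\tau$. This contradicts $\tau^\prime\succeq\tau$.

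\textbf{Step 1: mix over the basis.} Because the states form a complete orthonormal basis, the uniform mixture is $\frac{1}{d_1d_2}\sum_i\ket{\Psi_i}\bra{\Psi_i}=\mathbb{1}/(d_1d_2)$, which is a product state. Perfect discrimination means that for every $i$ the outcome branch labelled $i$ occurs with probability one. Apply $\Lambda$ to the input $\frac{\mathbb{1}}{d_1d_2}\otimes\tau$. The output is the flagged state $\sum_i\frac{1}{d_1d_2}\ket{i}\bra{i}_{\rm cl}\otimes\tau^\prime_i$, where $\tau^\prime_i$ is the returned resource on branch $i$. By assumption each $\tau^\prime_i$ can be converted to $\tau$ by LOCC, since each is locally equivalent to or better than $\tau$. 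The classical flag can be discarded or used locally, so LOCC maps $\tau$ (the input is product with $\tau$) to a state from which $\tau$ is recoverable on every branch.

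\textbf{Step 2: reverse the roles.} Before using $\Lambda$, the parties could locally prepare $\ket{\Psi_i}$ for a random $i$? They cannot, because $\ket{\Psi_i}$ is entangled. The real leverage is the opposite direction. Given $\tau$ and the product state $\mathbb{1}/(d_1d_2)$, running $\Lambda$ yields the classical label $i$ together with a resource no worse than $\tau$. Now use an entanglement measure $E$ that is additive, monotone, and which I would take to be the relative entropy of entanglement or the squashed entanglement. Consider instead the input $\ket{\Psi_i}\otimes\tau$. Because $\Lambda$ succeeds for each $i$ on its own, the final state on branch $i$ is $\ket{i}\bra{i}\otimes\tau^\prime_i$ regardless of the order. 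Linearity lets us compare inputs.

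The key inequality I would aim for is
\[
\frac{1}{d_1d_2}\sum_i E\big(\Psi_i\otimes\tau\big)\;\le\; E(\tau)+\log(d_1d_2)-S\big(\tfrac{1}{d_1d_2}\textstyle\sum_i\ldots\big).
\]
This is to be derived via a Holevo-type bound on LOCC-accessible information: locally accessible information plus final entanglement is at most $\log(d_1d_2)-\bar E$ plus $E(\tau)$. The known bound is $I_{\rm acc}^{\rm LOCC}\le \log(d_1d_2)-\frac{1}{N}\sum_i E_R(\Psi_i)$, due to Horodecki et al. and Ghosh et al. Generalising it to include a returned resource should give
\[
\log(d_1d_2)\le \log(d_1d_2)-\bar E_R+E_R(\tau)-E_R(\tau^\prime).
\]
Here $\bar E_R>0$ because every $\ket{\Psi_i}$ is entangled. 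This gives $E_R(\tau^\prime)<E_R(\tau)$, which contradicts $\tau^\prime\to\tau$ being possible by LOCC, since monotonicity requires $E_R(\tau^\prime)\ge E_R(\tau)$.

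\textbf{Main obstacle.} The hard step is proving the resource-augmented accessible-information bound. I would follow the Ghosh et al. proof. Track the relative entropy of entanglement between the ensemble state $\sum_i\frac{1}{N}\ket{i}\bra{i}_{\rm ref}\otimes\Psi_i\otimes\tau$ and separable states, with the reference system held by one party. Use monotonicity of $E_R$ under LOCC and subadditivity with respect to the classical register. Comparing the initial and final values then yields the inequality. If subadditivity of $E_R$ on $\tau$ causes trouble, I would replace $E_R$ by squashed entanglement, which is additive and monotone. Alternatively, for pure $\tau$ I would use the entanglement entropy together with the Nielsen majorization criterion.
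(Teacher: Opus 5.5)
Your proposal has a genuine gap. The whole argument rests on the resource-augmented bound $\log N\le\log N-\bar{E}_R+E_R(\tau)-E_R(\tau')$, with $N=d_1d_2$, and you only say the generalisation ``should give'' it; the first display is not even fully specified. Step~1 contributes nothing: sending $\mathbb{I}/N\otimes\tau$ to $\sum_i \frac{1}{N}\ketbra{i}{i}\otimes\tau'_i$ creates no entanglement and is trivially consistent with $\tau'_i=\tau$. More seriously, the method you propose for the key step cannot work. Take $\omega=\sum_i \frac{1}{N}\ketbra{i}{i}_R\otimes\Psi_i\otimes\tau$ with $R$ held by one party. Using block-diagonal separable states $\sum_i \frac{1}{N}\ketbra{i}{i}_R\otimes\sigma_i$ and subadditivity gives $E_R(\omega)\le\bar{E}_R+E_R(\tau)$. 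The output of the protocol is $\sum_i \frac{1}{N}\ketbra{i}{i}_R\otimes\ketbra{i}{i}\otimes\tau'_i$. Monotonicity therefore yields only $\frac{1}{N}\sum_i E_R(\tau'_i)\le\bar{E}_R+E_R(\tau)$. Here the entanglement of the $\Psi_i$ enters as an asset on the input side rather than as a cost, so no contradiction can emerge. A locally held classical reference never records that identifying an entangled state consumes entanglement.

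The missing idea, which is the paper's proof (following Horodecki et al.), is to apply completeness of the basis to the coherent purification instead of the incoherent mixture. Add parties $C,D$ and take $\ket{\Phi_{d_1}}_{AC}\otimes\ket{\Phi_{d_2}}_{BD}$, which is a product state across $AC:BD$. By $U\otimes U^\ast$ invariance it equals $\frac{1}{\sqrt{N}}\sum_k \ket{\psi_k}_{AB}\ket{\psi_k}^\ast_{CD}$. Suppose $A$ and $B$ could discriminate the basis perfectly using $\tau$ (shared between $A$ and $B$, hence across the cut) and receive $\tau'_k$ back. Then outcome $k$ occurs with probability $1/N$ and leaves $CD$ in the entangled state $\ket{\psi_k}^\ast$. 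Evaluate across $AC:BD$ an additive measure that is non-increasing on average under LOCC, such as squashed entanglement. This gives $E(\tau)\ge\frac{1}{N}\sum_k \bigl[E(\psi_k^\ast)+E(\tau'_k)\bigr]$, so $E(\tau'_k)<E(\tau)$ for some $k$, contradicting the LOCC conversion $\tau'_k\to\tau$. This is exactly your target inequality with $I_{\mathrm{acc}}=\log N$; the purified reference is what makes the input side carry only $E(\tau)$. Your worry about $E_R$ is also justified: the output side needs superadditivity, which $E_R$ does not have in general. That is why squashed entanglement is the natural choice.
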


\begin{proof}
We first consider any entanglement catalyst, $\tau$ ($\tau$ can be a pure state or a mixed state). Now, following a proof technique of \cite{Horodecki03}, we start with a state $\ket{\Psi}$, given by-
\begin{equation}\label{eq3}
\ket{\Psi} = \left(\frac{1}{\sqrt{d_1}}\sum_{i=1}^{d_1}\ket{ii}\right)_{AC}\otimes\left(\frac{1}{\sqrt{d_2}}\sum_{j=1}^{d_2}\ket{jj}\right)_{BD}
\end{equation}
where $A,~B,~C,~D$ are four spatially separated parties. Clearly, the state $\ket{\Psi}$ is a product state in $AC:BD$ bipartition. We rewrite the state $\ket{\Psi}$ in $AB:CD$ bipartition and it will become:
\begin{equation*}
\ket{\Psi} = \left(\frac{1}{\sqrt{d_1d_2}}\sum_{k=1}^{d_1d_2}\ket{k}_{AB}\ket{k}_{CD}\right).    
\end{equation*}
Next, we consider $\mathcal{B}\equiv\{\ket{\psi_k}\}_{k=1}^{d_1d_2}$ is an orthonormal basis in $d_1\otimes d_2$. We consider a unitary operator $U$, such that $U\ket{k} = \ket{\psi_k}$ $\forall k = 1,2,\dots,d_1d_2$. Then, we use $U\otimes U^{\ast}$ invariance of $\ket{\Psi}$ in $AB:CD$ bipartition and we further rewrite $\ket{\Psi}$ as $\frac{1}{\sqrt{d_1d_2}}(\sum_{k=1}^{d_1d_2}\ket{\psi_k}_{AB}\ket{\psi_k}^{\ast}_{CD})$, where $\ket{\psi_k}^\ast$ is the complex conjugate of $\ket{\psi_k}$ in the computational basis. So, if $\mathcal{B}$ contains only entangled states and they are locally distinguishable, then the states $\{\ket{\psi_k}_{AB}\}_k$ can be distinguished locally, releasing an entangled state $\ket{\psi_k}^\ast_{CD}$. This creates entanglement across $AC:BD$ bipartition. But recall that we have started with $\ket{\Psi}$ which was a product state in the $AC:BD$ bipartition. Clearly, distinguishing the states $\{\ket{\psi_k}_{AB}\}_k$ under LOCC is forbidden when they are entangled because creating entanglement from product state under LOCC is not possible. This proves that any entangled basis is locally indistinguishable.

However, we now assume that there is a catalyst-assisted local protocol that distinguishes the states $\{\ket{\psi_k}_{AB}\}_k$ perfectly. So, when such a protocol is acted on $\ket{\Psi}$ and distinguishes the states $\{\ket{\psi_k}_{AB}\}_k$, $\ket{\psi_k}^\ast_{CD}$ can be released along with the catalyst state. Overall, we can have the following transformation: 
\begin{equation}\label{eq4}
\Lambda_{\mbox{LOCC}}\left[\ketbra{\Psi}{\Psi}\otimes\tau\right] = \left(\ketbra{\psi_k}{\psi_k}\right)^\ast_{CD}\otimes\tau.
\end{equation}
Since LOCC cannot increase entanglement, we can assume $E[\ketbra{\Psi}{\Psi}\otimes\tau]\geq E[(\ketbra{\psi_k}{\psi_k})_{CD}^\ast\otimes\tau]$; $E$ is an entanglement measure. We further assume two things: (a) we are sticking to a particular bipartition $AC:BD$, and (b) $E$ is additive on tensor products like ``Squashed Entanglement'' \cite{Christandl04}. Thus, we obtain $E(\ket{\Psi}_{AC:BD})+E(\tau)\geq E(\ket{\psi_k}^\ast_{CD})+E(\tau)$. $E(\ket{\Psi}_{AC:BD})$ = 0 due to separability of the state in $AC:BD$ bipartition and $E(\ket{\psi_k}^\ast_{CD})>0$ when the considered basis, $\mathcal{B}$ contains only entangled states. Clearly, $E(\ket{\psi_k}^\ast_{CD})>E(\ket{\Psi}_{AC:BD})$. Or, $E(\ket{\psi_k}^\ast_{CD})+E(\tau)>E(\ket{\Psi}_{AC:BD})+E(\tau)$. Thus, $E(\ket{\Psi}_{AC:BD})+E(\tau)\geq E(\ket{\psi_k}^\ast_{CD})+E(\tau)$ is impossible. This implies that (\ref{eq4}) and corresponding assumption is not true. Therefore, perfect local discrimination of any orthonormal entangled basis, $\mathcal{B}\equiv\{\ket{\psi_k}\}_k$ under LOCC using an entanglement catalyst, $\tau$ is not possible. 

From this proof technique, it becomes obvious that there is no entanglement battery as well using which it is possible to distinguish the states of $\mathcal{B}$ perfectly under LOCC. This is due to the following fact. In case of an entanglement battery, $\tau$ the transformation is given by
\begin{equation*}
\Lambda_{\mbox{LOCC}}\left[\ketbra{\Psi}{\Psi}\otimes\tau\right] = \left(\ketbra{\psi_k}{\psi_k}\right)^\ast_{CD}\otimes\tau^\prime,
\end{equation*}
where $E(\tau^\prime)\geq E(\tau)$ for some entanglement measure $E$. Ultimately, we get an equation: $E(\ket{\Psi}_{AC:BD})+E(\tau)\geq E(\ket{\psi_k}^\ast_{CD})+E(\tau^\prime)$ for squashed entanglement which is required to be satisfied to achieve our goal. However, this is impossible as it is clearly evident from the above. This completes the proof.
\end{proof}

An immediate corollary due to the above theorem is given as the following.
\begin{corollary}\label{coro1}
We assume that for a locally indistinguishable set $\mathcal{S}$ of pure orthogonal entangled states, there exists a catalyst-assisted local protocol using which perfect discrimination of the states of $\mathcal{S}$ is possible. Then, for such a perfect discrimination, it is necessary that $|\mathcal{S}|<D$, where $|\mathcal{S}|$ is the cardinality of the set $\mathcal{S}$ and $D$ is the total dimension of the given Hilbert space.     
\end{corollary}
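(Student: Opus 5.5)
The plan is to argue by contraposition from Theorem~\ref{theo1}. Let $\mathcal{S}$ be a set of pure, pairwise orthogonal entangled states in $\mathcal{H}=\mathbb{C}^{d_1}\otimes\mathbb{C}^{d_2}$, with $D=d_1d_2$. Because the states of $\mathcal{S}$ are orthonormal (after normalization), they are linearly independent. Hence $|\mathcal{S}|\leq D$ automatically. So it suffices to rule out the boundary case $|\mathcal{S}|=D$ under the hypothesis that a catalyst-assisted (or battery-assisted) LOCC protocol discriminates $\mathcal{S}$ perfectly.

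\textbf{Step 1.} Suppose $|\mathcal{S}|=D$. Then $\mathcal{S}$ is $D$ orthonormal vectors in a $D$-dimensional space, so it spans $\mathcal{H}$ and is an orthonormal basis $\mathcal{B}$. By hypothesis every element of this basis is entangled. \textbf{Step 2.} Theorem~\ref{theo1} states that no orthonormal basis consisting only of entangled states can be perfectly distinguished under LOCC with the help of an entanglement catalyst or an entanglement battery $\tau$. This contradicts the assumed existence of the catalyst-assisted protocol. \textbf{Step 3.} Conclude that $|\mathcal{S}|\neq D$, and therefore $|\mathcal{S}|<D$. The same argument covers the battery case, since Theorem~\ref{theo1} treats both tools.

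The only point needing care is Step 1: the states must be orthogonal (hence independent) and must all be entangled, so that the full-cardinality set is exactly the situation covered by Theorem~\ref{theo1}. Both conditions are given in the hypotheses, so I expect no real obstacle.

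It is also worth remarking that the hypothesis "locally indistinguishable" is not used in the argument. It only guarantees that the catalyst is doing nontrivial work, since without a catalyst the same bound follows from the first part of the proof of Theorem~\ref{theo1}.
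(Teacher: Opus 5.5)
Your proposal is correct and follows the paper's own one-line argument: if $|\mathcal{S}|=D$, then $\mathcal{S}$ would be an orthonormal basis consisting only of entangled states, and Theorem~\ref{theo1} rules that out. Your explicit observation that orthogonality already forces $|\mathcal{S}|\leq D$ just spells out a step the paper leaves implicit.
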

This is straightforward from Theorem \ref{theo1} because if $|\mathcal{S}|=D$, then the set forms an orthonormal basis in the given Hilbert space which contains only entangled states. 

Nevertheless, we can also state the following corollary.
\begin{corollary}\label{coro2}
We consider any orthonormal multipartite basis $\mathcal{G}$ which contains only genuinely entangled states. For the perfect local discrimination of the states of $\mathcal{G}$, it is not possible to find a bipartite entangled state which can serve as an entanglement battery and releases a genuine multipartite entangled state after the completion of the discrimination task.
\end{corollary}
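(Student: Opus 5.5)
The plan is to adapt the teleportation-type argument from the proof of Theorem \ref{theo1} to $N$ parties $A_1,\dots,A_N$ holding the basis $\mathcal{G}=\{\ket{\psi_k}\}_{k=1}^{D}$, $D=\prod_j d_j$. We give each $A_j$ a partner $C_j$ in a separate lab and prepare
\begin{equation*}
\ket{\Psi}=\bigotimes_{j=1}^{N}\Big(\frac{1}{\sqrt{d_j}}\sum_{i=1}^{d_j}\ket{ii}\Big)_{A_jC_j}.
\end{equation*}
This state is a product across the $N$ groups $A_jC_j$; in particular it is product across every bipartition that keeps each pair $A_jC_j$ together. Regrouping into $A_1\cdots A_N : C_1\cdots C_N$ and using the $U\otimes U^\ast$ invariance, with $U\ket{k}=\ket{\psi_k}$, we rewrite $\ket{\Psi}=D^{-1/2}\sum_k\ket{\psi_k}_{A}\ket{\psi_k}^\ast_{C}$.

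Suppose, for contradiction, that a bipartite entangled battery $\tau$ exists. It is shared between two parties, say $A_1$ and $A_2$ (any other pair works the same way). Suppose also that an LOCC protocol among the $A$'s discriminates $\mathcal{G}$ perfectly and returns a battery $\tau'$ with $E(\tau')\ge E(\tau)$. Applying this protocol to $\ket{\Psi}\otimes\tau$ leaves, for each outcome $k$, the state $\ket{\psi_k}^\ast_{C_1\cdots C_N}\otimes\tau'$. Since complex conjugation in a product basis preserves genuine multipartite entanglement, $\ket{\psi_k}^\ast$ is genuinely entangled among the $C$'s.

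The key step is to choose the cut. We take the bipartition $X=A_1C_1\cup A_2C_2$ versus $Y=$ everything else; this requires $N\ge 3$, which is the relevant case since for $N=2$ Theorem \ref{theo1} already applies. Across $X:Y$ the initial state $\ket{\Psi}\otimes\tau$ is product, because $\tau$ lives entirely inside $X$. So the initial entanglement is $E(\ket{\Psi}_{X:Y})+E_{X:Y}(\tau)=0$, where $E$ is squashed entanglement, additive as in Theorem \ref{theo1}. On the output side, $\tau'$ remains local to $X$ and contributes nothing. However, $\ket{\psi_k}^\ast$ is genuinely multipartite entangled, so it is entangled across $C_1C_2:C_3\cdots C_N$, which gives $E_{X:Y}>0$. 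This contradicts monotonicity of $E$ under LOCC. Notice that the battery plays no role whatsoever in this argument. That is the whole point: a bipartite battery can never carry entanglement across a cut that separates its two holders from the rest.

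The step I expect to be most delicate is justifying the choice of cut. If the battery is shared between $A_1$ and $A_2$, the only cuts on which $\tau$ is free are those keeping $A_1$ and $A_2$ on the same side. We need genuine entanglement of $\ket{\psi_k}$, rather than mere entanglement, to guarantee that such a cut still carries entanglement in the output. The second point needing care is the conclusion about what is released. The contradiction already shows that no perfect discrimination protocol exists at all with this battery. A fortiori, no protocol both discriminates $\mathcal{G}$ and leaves a genuine multipartite entangled state, which is the form of the statement in Corollary \ref{coro2}. This holds whether the released genuine multipartite entangled state is interpreted as $\ket{\psi_k}^\ast$ or as some replacement for $\tau'$, because the latter also cannot exist: it would be entangled across a cut on which everything was initially product.
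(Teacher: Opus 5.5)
Your proposal is correct and follows the same route as the paper. You pick a bipartition that keeps both holders of the bipartite battery on one side, so the battery carries no entanglement across it, while the genuinely multipartite entangled output is still entangled across it; you then rerun the Horodecki-type argument from (\ref{eq3}), making the $N$-party version of $\ket{\Psi}$ and the cut $A_1C_1A_2C_2$ versus the rest explicit. Your side claim that $\tau'$ stays local to $X$ is not needed and need not hold: the contradiction already follows from $E(\ket{\psi_k}^\ast)>0$, because including $\tau'$ can only increase the output entanglement, by monotonicity under discarding systems.
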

To prove the above, one can consider a bipartition in which the entanglement battery is not shared but in that bipartition entanglement is generated after the completion of the discrimination task. Since, the entanglement battery is a bipartite one but a genunine multipartite entangled state is produced in the process, the aforesaid bipartition always exists. The reason is that the bipartite entanglement is shared between two parties only, but the genuine multipartite entanglement is shared among at least three parties. Thereafter, the bipartite proof technique works starting from (\ref{eq3}). 

From the discussions so far (Theorem \ref{theo1}, Corollary \ref{coro1}, and Corollary \ref{coro2}), it is quite clear that constructions of specific cases of LSDPs where the given sets of entangled states do not form an orthonormal basis are particularly important because for them an exact entanglement battery or an exact entanglement catalyst may provide an advantage. We now proceed to present such constructions.


\subsection{Exact cases}
We consider the following multipartite set of four orthogonal quantum states. [From now on, we consider only pure state entanglement as an entanglement battery or an entanglement catalyst. Also, we may avoid the normalization coefficients in several places because in those places the coefficients do not play any important role.]
\begin{equation}\label{eq5}
\begin{array}{c}
\ket{\Psi_0} = \ket{\psi_0}\otimes\ket{\psi_0^\prime},~\ket{\Psi_1} = \ket{\psi_1}\otimes\ket{\psi_1^\prime},\\[1 ex]
\ket{\Psi_2} = \ket{\psi_2}\otimes\ket{\psi_1^\prime},~\ket{\Psi_3} = \ket{\psi_3}\otimes\ket{\psi_1^\prime}.
\end{array}
\end{equation}
The states $\ket{\psi_i}$ ($\forall~i=0,1,2,3$) are $n$-qubit states and $\ket{\psi_i^\prime}$ ($\forall~i=0,1$) are also $n$-qubit states. Clearly, $\ket{\Psi_i}$ are $2n$-qubit states ($\forall~i=0,1,2,3$). $\ket{\Psi_i}$ are shared between $n$ spatially separated parties. Corresponding Hilbert space is $\mathcal{H}=[\mathbb{C}^2\otimes\mathbb{C}^2]^{\otimes n}$. So, each party holds a two-qubit system. The states $\{\ket{\psi_i}\}$ are given as the following.
\begin{equation}\label{eq6}
\begin{array}{c}
\ket{\psi_0} = \ket{0}^{\otimes k}\ket{0}^{\otimes n-k} + \ket{1}^{\otimes k}\ket{1}^{\otimes n-k},\\[1 ex]

\ket{\psi_1} = \ket{0}^{\otimes k}\ket{0}^{\otimes n-k} - \ket{1}^{\otimes k}\ket{1}^{\otimes n-k},\\[1 ex]

\ket{\psi_2} = \ket{0}^{\otimes k}\ket{1}^{\otimes n-k} + \ket{1}^{\otimes k}\ket{0}^{\otimes n-k},\\[1 ex]

\ket{\psi_3} = \ket{0}^{\otimes k}\ket{1}^{\otimes n-k} - \ket{1}^{\otimes k}\ket{0}^{\otimes n-k},
\end{array}
\end{equation}
where $1\leq k<n$. We note here that the state $\ket{\psi_0^\prime}$ has the same form as $\ket{\psi_0}$. Similarly, the state $\ket{\psi_1^\prime}$ has the same form as $\ket{\psi_1}$. We now prove the following.

\begin{proposition}\label{prop2}
The set $\{\ket{\Psi_i}\}_{i=0}^3$ is locally indistinguishable. A $2\otimes2$ maximally entangled state as resource is sufficient to distinguish the states of the given set under LOCC. Furthermore, in this process of discrimination, it is possible to get back a $2^{\otimes n}$ Greenberger-Horne-Zeilinger (GHZ) state.
\end{proposition}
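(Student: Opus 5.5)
The plan has three parts. First, reduce local indistinguishability to a known bipartite problem by choosing a cut. Second, distinguish with one ebit by first compressing the $\ket{\psi_i}$ part and then teleporting. Third, check that the $\ket{\psi'}$ part is never touched, so a GHZ state survives.

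\emph{Local indistinguishability.} I will argue by contradiction: an LOCC protocol for $n$ parties is in particular LOCC across any bipartition. So it suffices to exhibit one cut across which the set is indistinguishable. Take the cut between the first $k$ parties and the remaining $n-k$ parties. Under local isometries $\ket{0}^{\otimes k}\mapsto\ket{0}$, $\ket{1}^{\otimes k}\mapsto\ket{1}$ on each side, the states (\ref{eq6}) become the Bell states $\Phi^\pm,\Psi^\pm$. The same holds for the $\ket{\psi'}$ qubits, which become $\Phi^\pm$. The set therefore becomes the $4\otimes4$ set
\[
\{\Phi^+\otimes\Phi^+,\ \Phi^-\otimes\Phi^-,\ \Psi^+\otimes\Phi^-,\ \Psi^-\otimes\Phi^-\}.
\]
This is a set of four maximally entangled lattice (Bell$\otimes$Bell) states. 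I would show it is not even PPT-distinguishable. Two routes are available:
\begin{itemize}
\item map it by local Pauli/Clifford unitaries onto the Yu--Duan--Ying set \cite{Yu12}, which is known to be PPT-indistinguishable; or
\item directly exhibit the SDP witness of \cite{Yu12}, namely an operator $W=\sum_i$ (partial transposes of suitable Pauli products) that bounds the average PPT success probability strictly below $1$.
\end{itemize}
This is the step I expect to be the main obstacle. One must check that the given four states, with the $\Phi^-$ repeated in the second factor, really are locally unitarily equivalent to (or as hard as) the Yu--Duan--Ying set, rather than to a distinguishable lattice set. If the equivalence fails, I would compute the PPT bound directly, since it reduces to a small symmetric SDP over the Pauli group.

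\emph{Discrimination with one ebit.} The protocol first compresses each GHZ-type state $\ket{\psi_i}$ onto two qubits without disturbing its labels:
\begin{itemize}
\item parties $2,\dots,k$ measure their $\ket{\psi}$-qubit in the $X$ basis;
\item parties $k+2,\dots,n$ do likewise;
\item announced outcomes are corrected by local $Z$ gates.
\end{itemize}
Since the $X$-outcomes only flip relative phases in a known way, $\ket{\psi_i}$ is mapped to the corresponding Bell state on the $\ket{\psi}$-qubits of party $1$ and party $k+1$. The label $i$ is preserved. Next, using the shared $2\otimes2$ maximally entangled state between party $1$ and party $k+1$, party $k+1$ teleports its $\ket{\psi}$-qubit to party $1$. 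Party $1$ now holds the whole Bell state and identifies $i$ by a Bell-basis measurement. This works because the four $\ket{\psi_i}$ are mutually orthogonal, so the $\ket{\psi}$ part alone already determines $i$.

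\emph{Recovery of a GHZ state.} No operation is applied to the $\ket{\psi'}$ qubits during the protocol. After $i$ is learned, the remaining state is $\ket{\psi_0'}$ if $i=0$ and $\ket{\psi_1'}$ otherwise. Both are $n$-party GHZ-type states, and $\ket{\psi_1'}$ is turned into $\ket{\psi_0'}$ by a single local $Z$ on one of the first $k$ qubits. Relabeling the first $k$ qubits by local $X$ flips if needed, this gives the standard $2^{\otimes n}$ GHZ state. So one ebit is consumed and an $n$-party GHZ state is returned with certainty. Across the cut used above, this GHZ state carries the same one ebit, and it also carries entanglement across every other cut, which is the advantage claimed in the proposition.
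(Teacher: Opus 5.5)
Your discrimination protocol and the GHZ-recovery argument are correct. They differ from the paper's only in detail. The paper first teleports one $\psi$-qubit across the $k|n-k$ cut. The receiving party then measures the parity projectors $P_1,P_2$, and the paper finishes with the two-orthogonal-pure-states result of \cite{Walgate00}. You instead first collapse the GHZ-type states onto a Bell pair by $X$-measurements, then teleport and do a Bell measurement. The relative phase $(-1)^m$ is the same for all four states, so one $Z$ correction suffices. In both protocols the primed register is never touched.

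The gap is in the indistinguishability claim. You leave it open, and your route (a) cannot close it. Write the reduced states as $(U_j\otimes I)\ket{\Phi}$ in $4\otimes4$ with $\ket{\Phi}=\frac12\sum_{k=0}^{3}\ket{kk}$. Your set then has $U_j\in\{II,ZZ,XZ,YZ\}$, while the Yu--Duan--Ying set has $\{II,XX,YY,ZZ\}$. A local unitary $A\otimes B$ sends $U_j\mapsto AU_jB^{T}$, so it only conjugates the products $U_i^\dagger U_j$. The Yu--Duan--Ying set is a group up to phases, so its products give only three distinct Paulis. Yours give six: $ZZ,XZ,YZ,XI,YI,ZI$. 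Hence no local unitary maps one set onto the other.

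Route (b) does work, and it is short. Let $F$ be the swap, so that $\Phi_j^{T_B}=\frac14F_j$ with $F_j=(U_j\otimes I)F(U_j^\dagger\otimes I)$. Each $F_j$ is diagonal in the lattice basis $\ket{\alpha}=(\sigma_\alpha\otimes I)\ket{\Phi}$. Its eigenvalue is $-1$ exactly when $U_j\sigma_\alpha$ contains an odd number of $Y$'s, because $F(P\otimes I)\ket{\Phi}=(P^{T}\otimes I)\ket{\Phi}$. Take $\sigma_\alpha=Y\otimes X$. The four products are, up to phase, $YX$, $XY$, $ZY$ and $IY$, all antisymmetric. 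So $W:=I-2\ketbra{\alpha}{\alpha}$ satisfies $W\geq F_j$ for every $j$. Since $M_j^{T_B}\geq0$, any PPT POVM $\{M_j\}$ then gives
\[
p=\frac{1}{16}\sum_j\mathrm{Tr}(M_j^{T_B}F_j)\leq\frac{1}{16}\sum_j\mathrm{Tr}(M_j^{T_B}W)=\frac{1}{16}\mathrm{Tr}\,W=\frac{7}{8}.
\]
This establishes PPT, hence LOCC, indistinguishability across the cut. It is also more self-contained than the paper, which at this step only cites \cite{Yu12}.
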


\begin{proof}
The proof consists of a couple of steps. In the following, these steps are presented one by one.

(i) First, we prove that the states of (\ref{eq5}) are locally indistinguishable. For this purpose, we recall the forms of $\{\ket{\psi_i}\}_{i=0}^3$ and group $n$ qubits into two: first $k$ qubits and remaining $n-k$ qubits. Thus, we find a bipartition of $k$-qubit versus $n-k$ qubits. Then, we consider the following mapping: $\ket{0}^{\otimes k}\rightarrow\ket{\mathbf{0}}$, $\ket{0}^{\otimes n-k}\rightarrow\ket{\tilde{\mathbf{0}}}$, $\ket{1}^{\otimes k}\rightarrow\ket{\mathbf{1}}$, $\ket{1}^{\otimes n-k}\rightarrow\ket{\tilde{\mathbf{1}}}$. We apply similar mapping for $\{\ket{\psi_i^\prime}\}_{i=0}^1$. So, the set of (\ref{eq5}) is mapped into the following:
\begin{equation}\label{eq7}
\begin{array}{l}
\ket{\Psi_0} = (\ket{\mathbf{0}\tilde{\mathbf{0}}} + \ket{\mathbf{1}\tilde{\mathbf{1}}})\otimes(\ket{\mathbf{0}\tilde{\mathbf{0}}} + \ket{\mathbf{1}\tilde{\mathbf{1}}}),\\[1 ex]
\ket{\Psi_1} = (\ket{\mathbf{0}\tilde{\mathbf{0}}} - \ket{\mathbf{1}\tilde{\mathbf{1}}})\otimes(\ket{\mathbf{0}\tilde{\mathbf{0}}} - \ket{\mathbf{1}\tilde{\mathbf{1}}}),\\[1 ex]
\ket{\Psi_2} = (\ket{\mathbf{0}\tilde{\mathbf{1}}} + \ket{\mathbf{1}\tilde{\mathbf{0}}})\otimes(\ket{\mathbf{0}\tilde{\mathbf{0}}} - \ket{\mathbf{1}\tilde{\mathbf{1}}}),\\[1 ex]
\ket{\Psi_3} = (\ket{\mathbf{0}\tilde{\mathbf{1}}} - \ket{\mathbf{1}\tilde{\mathbf{0}}})\otimes(\ket{\mathbf{0}\tilde{\mathbf{0}}} - \ket{\mathbf{1}\tilde{\mathbf{1}}}).
\end{array}
\end{equation}
These states look like two Bell states in tensor product form. It is known that states in such specific forms cannot be perfectly distinguished by LOCC \cite{Yu12}. So, the states of (\ref{eq5}) are locally indistinguishable in a particular bipartition. Finally, we note that a multipartite set is locally indistinguishable if it is locally indistinguishable in a bipartition. This completes the proof of local indistinguishability.

(ii) Next, we consider a minimum dimensional resource state $\ket{00}+\ket{11}$ as $\ket{\xi}$. Using this resource state the parties try to distinguish the states $\{\ket{\psi_i}\}_{i=0}^3$ without disturbing the states $\{\ket{\psi_0^\prime},~\ket{\psi_1^\prime}\}$. The resource state is shared between two parties. The first party is picked from the group of parties among which the first $k$ qubits are distributed and the second party is picked from the group of parties among which the remaining $n-k$ qubits are distributed. Now, after sharing $\ket{\xi}$, one of these two parties teleports a subsystem (a qubit) to the location of the other party. In such a situation, the set of (\ref{eq5}) is transformed to the following: 
\small
\begin{equation}\label{eq8}
\begin{array}{l}
\ket{\Psi_0} = (\ket{0}^{\otimes k-1}\ket{00}\ket{0}^{\otimes n-k-1} + \ket{1}^{\otimes k-1}\ket{11}\ket{1}^{\otimes n-k-1})\otimes\ket{\psi_0^\prime},\\[1 ex]
\ket{\Psi_1} = (\ket{0}^{\otimes k-1}\ket{00}\ket{0}^{\otimes n-k-1} - \ket{1}^{\otimes k-1}\ket{11}\ket{1}^{\otimes n-k-1})\otimes\ket{\psi_1^\prime},\\[1 ex]
\ket{\Psi_2} = (\ket{0}^{\otimes k-1}\ket{01}\ket{1}^{\otimes n-k-1} + \ket{1}^{\otimes k-1}\ket{10}\ket{0}^{\otimes n-k-1})\otimes\ket{\psi_1^\prime},\\[1 ex]
\ket{\Psi_3} = (\ket{0}^{\otimes k-1}\ket{01}\ket{1}^{\otimes n-k-1} - \ket{1}^{\otimes k-1}\ket{10}\ket{0}^{\otimes n-k-1})\otimes\ket{\psi_1^\prime}.
\end{array}
\end{equation}
\normalsize
Now, one party has two qubits (of $\ket{\psi_i}$) after the teleportation. This is shown in a separate ket notation in the above equation. Next, this party who has two qubits (of $\ket{\psi_i}$), performs a measurement on these qubits. This measurement is defined by two projection operators: $P_1 = |00\rangle\langle00|+|11\rangle\langle11|$ and $P_2 = |01\rangle\langle01|+|10\rangle\langle10|$. After such a measurement, two states from the set $\{\ket{\psi_i}\}_{i=0}^3$ get eliminated. Since, the aforesaid measurement is orthogonality-preserving, the remaining two states (orthogonal and pure states) can also be distinguished by LOCC \cite{Walgate00}. In this way, after local discrimination among the states of the set $\{\ket{\psi_i}\}_{i=0}^3$, the $n$ spatially separated parties are either left with $\ket{\psi_0^\prime}$ or $\ket{\psi_1^\prime}$, an $n$-qubit GHZ state. This completes the proof of the above proposition. 
\end{proof}

In the above, we have seen that with the help of a two-qubit maximally entangled state $\ket{\xi}$, it is possible to distinguish the states of (\ref{eq5}). Furthermore, $n$ parties are left with a genuinely entangled state with unit probability. This state is basically $\ket{\xi^\prime}$ which can never be found from $\ket{\xi}$ under LOCC. This is because it is never possible to obtain a genuinely entangled state locally from a bipartite state, not even probabilistically. But the opposite conversion might be possible. For example, here it is possible to obtain the bipartite entangled state $\ket{\xi}$ locally form the genuinely entangled state $\ket{\xi^\prime}$. So, qualitatively, $\ket{\xi^\prime}$ is a better state compared to the state $\ket{\xi}$. This is an example of an exact entanglement battery assisted local state discrimination, a multi-qubit generalization of an exact entanglement catalyst assisted local state discrimination, given in \cite{Yu12}. However, what we obtain here is much stronger and this is due to the following. 
\begin{remark}\label{rem1}
The resource state $\ket{\xi}$ is always a $2\otimes2$ maximally entangled state here. 
On the other hand, the state $\ket{\xi^\prime}$ which is received after the discrimination process is a $2^{\otimes n}$ maximally entangled GHZ state where `$n$' can be very large. Clearly, we obtain a huge advantage in recovering entanglement.   
\end{remark}
Next, we construct an exact case of multi-qubit entanglement catalyst. For this purpose we consider, the states $\{\ket{\psi_i}\otimes\ket{\psi_1^\prime}\}_{i=4}^{2^n-1}$ along with the states of (\ref{eq5}) and for this new set, we present the following proposition.
\begin{proposition}\label{prop3}
A $2^{\otimes n}$ maximally entangled GHZ state acts as an exact entanglement catalyst in the local discrimination of $\{\ket{\Psi_i}\}_{i=0}^{2^n-1}$, where $\ket{\Psi_0}$ = $\ket{\psi_0}\otimes\ket{\psi_0^\prime}$, $\ket{\Psi_i}$ = $\ket{\psi_i}\otimes\ket{\psi_1^\prime}$ for $i=1,\dots,2^n-1$, $\{\ket{\psi_i}\}_{i=0}^{2^n-1}$ constitutes a $2^{\otimes n}$ maximally entangled GHZ basis, $\{\ket{\psi_i}\}_{i=0}^3$ are defined in (\ref{eq6}) and $\{\ket{\psi_0^\prime},~\ket{\psi_1^\prime}\}$ having the same form as that of $\{\ket{\psi_0},~\ket{\psi_1}\}$ respectively. 
\end{proposition}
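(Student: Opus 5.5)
The plan has two parts: show the set $\{\ket{\Psi_i}\}_{i=0}^{2^n-1}$ is locally indistinguishable (so a resource is needed), then give an explicit LOCC protocol that consumes one $2^{\otimes n}$ GHZ state $\ket{G}=\ket{0}^{\otimes n}+\ket{1}^{\otimes n}$, identifies $i$, and leaves the parties holding a GHZ state again.

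\emph{Indistinguishability.} The subset $\{\ket{\Psi_i}\}_{i=0}^3$ is exactly the set of (\ref{eq5}). By part (i) of the proof of Proposition~\ref{prop2} it is locally indistinguishable. A set that contains a locally indistinguishable subset is itself locally indistinguishable, so this step is immediate.

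\emph{Protocol.} Every element of the GHZ basis can be written as $\ket{\psi_i}=\ket{\mathbf{x}}+(-1)^s\ket{\bar{\mathbf{x}}}$. Here $\mathbf{x}\in\{0,1\}^n$ is defined up to complementation, so it carries $n-1$ bits, and $s\in\{0,1\}$ is a phase bit. Each party holds one qubit of $\ket{\psi_i}$, one qubit of $\ket{\psi'_{0/1}}$, and one qubit of the catalyst $\ket{G}$.
\begin{enumerate}
\item Each party applies a local CNOT with its $\ket{\psi_i}$ qubit as control and its $\ket{G}$ qubit as target. On the branch $\ket{\mathbf{x}}$ the catalyst becomes $\ket{\mathbf{x}}+\ket{\bar{\mathbf{x}}}$. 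On the branch $\ket{\bar{\mathbf{x}}}$ it becomes $\ket{\bar{\mathbf{x}}}+\ket{\mathbf{x}}$, which is the same state. So the joint state factorizes as $\ket{\psi_i}\otimes(\ket{\mathbf{x}}+\ket{\bar{\mathbf{x}}})$ and the data register is undisturbed.
\item All parties measure their catalyst qubits in the computational basis and broadcast the results. This reveals $\mathbf{x}$ up to complement.
\item Knowing $\mathbf{x}$, the parties apply local $X$ flips that map $\ket{\psi_i}$ to $\ket{0}^{\otimes n}\pm\ket{1}^{\otimes n}$.
\item Each party measures its data qubit in the $X$ basis. The parity of the outcomes gives $s$. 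Together with $\mathbf{x}$ this fixes $i$, and orthogonality guarantees this identification is perfect.
\end{enumerate}

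\emph{Catalyst recovery.} The register $\ket{\psi'_0}$ or $\ket{\psi'_1}$ is never touched, so afterwards the parties hold either $\ket{0}^{\otimes n}+\ket{1}^{\otimes n}$ or $\ket{0}^{\otimes n}-\ket{1}^{\otimes n}$. They also know which one, because they know $i$. If it is $\ket{\psi'_1}$, a single local $Z$ on one qubit turns it into $\ket{G}$. The catalyst is therefore returned exactly, with $p=1$ and $\tau'=\tau$ up to local unitary equivalence, as the definition of an exact entanglement catalyst requires.

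The main obstacle is step 1: checking carefully that the CNOT-coupling leaves $\ket{\psi_i}$ undisturbed, so the phase bit survives to be read out in step 4. The bookkeeping that the released GHZ state is locally equivalent to the consumed one is routine by comparison.
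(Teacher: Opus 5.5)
Your proposal is correct and follows the paper's proof: local indistinguishability comes from the subset $\{\ket{\Psi_i}\}_{i=0}^3$ of Proposition~\ref{prop2}, and a GHZ resource is used to discriminate the GHZ basis while the untouched $\ket{\psi_0^\prime}$ or $\ket{\psi_1^\prime}$ is left as the returned catalyst. The only difference is that the paper cites \cite{Bandyopadhyay18} for the GHZ-assisted discrimination step, whereas you write that protocol out explicitly (local CNOTs copying the bit pattern onto the resource, then an $X$-basis parity measurement) and add the local $Z$ that turns $\ket{\psi_1^\prime}$ into the original GHZ form.
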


\noindent
{\bf Note:} $\forall i =0,1,\dots,2^n-1$, $\ket{\Psi_i}$ are $2n$-qubit states but the states are shared among $n$ spatially separated parties. So, each party holds a two-qubit system like the previous case.

\begin{proof}
The set $\{\ket{\Psi_i}\}_{i=0}^{2^n-1}$ must be locally indistinguishable because it contains a locally indistinguishable subset $\{\ket{\Psi_i}\}_{i=0}^{3}$, proved in Proposition \ref{prop2}. Next, we distinguish the states $\{\ket{\psi_i}\}_{i=0}^{2^n-1}$ without disturbing $\ket{\psi_0^\prime}$ and $\ket{\psi_1^\prime}\}$ with the help of an $n$-qubit maximally entangled GHZ state. It is known that an $n$-qubit maximally entangled GHZ basis can be perfectly distinguished 
with the help of an $n$-qubit maximally entangled GHZ state as resource~\cite{Bandyopadhyay18}. Now after discrimination, the spatially separated parties are left with either $\ket{\psi_0^\prime}$ or $\ket{\psi_1^\prime}$. This completes the proof.
\end{proof}

As seen above, teleportation or more generally the state merging protocol~\cite{Horodecki_2006} can be a useful tool for the task of local state discrimination. Specifically, Alice could use state merging, expending an appropriate amount of additional entanglement between Alice and Bob, to send system $A$ to Bob, who stores the state in system $\tilde{B}$. Then Bob can perform the discrimination task locally on $B$ and $\tilde{B}$ and announce the results. With the remaining system $A'$ and $B'$, Alice and Bob can try to recover the entanglement they expended to perform the initial state merging.

Suppose that we have two pure states, $\psi_{AB}$, and $\xi_{A'B'}$. Here the entanglement cost of state merging $\psi_{AB}\otimes\xi_{A'B'}$ is $S(A\vert B)_{\psi}+S(A'\vert B')_{\xi}$ (in terms of conditional entropy) and entanglement content in $\xi_{A'B'}$ is $S(A')_{\xi}$ (in terms of von Neumann entropy). We want to use the state merging protocol to send $AA'$ to Bob such that, using the ebits obtained as a result of the state merging protocol, we can distill $\xi_{A'B'}$. For this, we require that the entanglement content of $\xi_{A'B'}$ be less than the entanglement gained as result of state merging i.e., $-S(A\vert B)_{\psi}-S(A'\vert B')_{\xi}$, or 
\begin{align}
	&-S(A\vert B)_{\psi}-S(A'\vert B')_{\xi}> S(A')_{\xi}\nonumber\\
	&\implies S(B')_{\xi}-S(A'B')_{\xi} -S(A\vert B)_{\psi} > S(A')_{\xi}\nonumber\\
	&\implies -S(A'B')_{\xi} +S(B')_{\xi} -S(A')_{\xi} > S(A\vert B)_{\psi}\nonumber\\
	&\implies -S(A'B')_{\xi} > S(A\vert B)_{\psi}
\end{align}
Since $\xi_{A'B'}$ is a pure state, we get,
\begin{equation}
	S(A\vert B)_{\psi}< 0 \implies S(A)_{\psi}>0
\end{equation}
This implies that in order to recover $\xi_{A'B'}$, after sending $AA'$ to Bob using the state merging, $\psi_{AB}$ must be itself be entangled. Therefore, a set of pure bipartite states $\{\psi_{AB}^{i}\}$ can be perfectly locally distinguished using an entangled catalyst and state merging only if all $\psi_{AB}^{i}$ are entangled.

Next, we discuss some cases of approximate entanglement battery and entanglement catalyst.

\subsection{Approximate cases}
We consider a bipartite quantum system associated with the Hilbert space $\mathcal{H}$ = $\mathbb{C}^d\otimes\mathbb{C}^d$ (in brief, we say $d\otimes d$ instead of $\mathbb{C}^d\otimes\mathbb{C}^d$). For such a Hilbert space, we prove a proposition outlined below.
\begin{proposition}\label{prop4}
If $d = {d^\prime}^2$ and $d$ is sufficiently large, then there exists a set of $d+1$ maximally entangled states in $d\otimes d$, for local discrimination of which a $d^\prime\otimes d^\prime$ maximally entangled state can be useful as an approximate entanglement catalyst.
\end{proposition}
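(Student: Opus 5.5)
A natural construction comes from the known many-copy indistinguishable ensembles, as the abstract suggests. Take a set $\mathcal{S}_0=\{\ket{\phi_j}\}_{j=1}^{d'+1}$ of $d'+1$ maximally entangled states in $d'\otimes d'$ that is locally indistinguishable even given many copies. For instance, use the lattice/Bell-type sets (generalized Bell states $X^aZ^b$ constructions) for which Yu--Duan--Ying type results show that $d'+1$ maximally entangled states cannot be perfectly distinguished by LOCC. Then pass to two copies: $d=d'^2$ and $d\otimes d\cong (d'\otimes d')^{\otimes 2}$. From $\mathcal{S}_0$ I will build a set $\mathcal{S}$ of $d+1$ maximally entangled states in $d\otimes d$. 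A concrete choice is $\ket{\phi_j}\otimes\ket{\phi_j}$ for $j=1,\dots,d'+1$, padded with further states $\ket{\phi_1}\otimes\ket{\chi_m}$, where the $\ket{\chi_m}$ run over generalized Bell states orthogonal to the relevant ones. The padding is chosen so that the total is $d+1$, all states are pairwise orthogonal, and all are maximally entangled. Local indistinguishability of $\mathcal{S}$ then follows because it contains a subset that is locally indistinguishable in a suitable sense, namely the two-copy indistinguishable ensemble. This mirrors how Proposition~\ref{prop3} inherited indistinguishability from Proposition~\ref{prop2}.

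Next I will give the protocol with one $d'\otimes d'$ maximally entangled state $\ket{\Phi_{d'}}$. Alice teleports the first $d'$-dimensional factor of her share to Bob, consuming $\ket{\Phi_{d'}}$. Bob then holds the whole first tensor factor and measures it in the basis $\{\ket{\phi_j}\}\cup\{\text{complement}\}$. This perfectly identifies the first-factor label $j$, and the second factor is left intact as a known maximally entangled $d'\otimes d'$ state shared between Alice and Bob. In most branches (those with $j\neq 1$) the second factor is the determined state $\ket{\phi_j}$. It is maximally entangled in $d'\otimes d'$, so by local unitaries it is converted back into $\ket{\Phi_{d'}}$, and the catalyst is recovered exactly. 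In the branch $j=1$, the remaining ambiguity is among the $\ket{\chi_m}$ together with $\ket{\phi_1}$. Here the residual discrimination is done by measuring the second factor, for example with a one-way LOCC measurement or by spending the state, and the catalyst is lost.

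For the bookkeeping in the approximate sense of Eq.~(\ref{eq2}), the process succeeds in returning $\tau$ with probability $p$, which is the probability of landing in a branch where the second factor is intact. Averaged over a uniform prior, $p$ tends to $1$ as $d$ grows, because the bad branches form a vanishing fraction of $d+1$. This yields the claim that the catalyst is returned "with probability very close to one", and it explains the hypothesis that $d$ is sufficiently large. Over $t$ instances the binomial statement of Eq.~(\ref{eq2}) then gives an average return of $pt$ catalysts.

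The main obstacle is designing the $d+1$ states so that three requirements hold together. First, they must be genuinely LOCC-indistinguishable without help; I would try invoking the known result that more than $d$ maximally entangled states in $d\otimes d$ are never perfectly LOCC-distinguishable, which gives this for free. Second, they must be pairwise orthogonal and maximally entangled. Third, only a vanishing fraction of them should require consuming the second factor. I expect the counting to be delicate, in particular ensuring that the padding states share a fixed second factor across most of the set while keeping orthogonality. If the uniform-prior probability does not tend to $1$, I would instead weight the prior or reinterpret "on average" via the expected entanglement returned.
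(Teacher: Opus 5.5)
There is a genuine gap in the set you construct, and it is exactly the counting issue you flagged at the end. Your set has $d'+1$ states $\ket{\phi_j}\otimes\ket{\phi_j}$. All $d'^2-d'$ padding states $\ket{\phi_1}\otimes\ket{\chi_m}$ share the first factor $\ket{\phi_1}$. So after you teleport and measure the first factor, the outcome $j=1$ is compatible with $d'^2-d'+1$ of the $d'^2+1$ states. The outcomes that return the catalyst ($j=2,\dots,d'+1$) have total probability $d'/(d'^2+1)\to 0$, which is the opposite of what you claim.

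Worse, in the $j=1$ branch the remaining task is to discriminate $d'^2-d'+1\geq d'+1$ orthogonal maximally entangled states of $d'\otimes d'$ (for $d'\geq 2$). This is impossible by LOCC, by the same bound \cite{Ghosh04, Hayashi06} you invoke. You also have no entanglement left to spend, because the resource was consumed in the teleportation, and the unknown state cannot serve as the resource for its own identification. So your protocol does not achieve perfect discrimination at all. Reweighting the prior cannot repair that.

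The missing idea is to reverse the roles of the two factors. Let the first factor run over a full maximally entangled basis $\{\ket{\psi_j}\}_{j=0}^{d'^2-1}$ of $d'\otimes d'$ with one common second factor. Then add a single extra state that shares a first factor with one of these and differs in the second. Concretely, the paper takes $\ket{\Psi_0}=\ket{\psi_0}\otimes\ket{\psi'_0}$ and $\ket{\Psi_i}=\ket{\psi_{i-1}}\otimes\ket{\psi'_1}$ for $i=1,\dots,d$, with $\ket{\psi'_0}\perp\ket{\psi'_1}$ both maximally entangled.

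These are $d+1$ orthogonal maximally entangled states in $d\otimes d$. They are LOCC-indistinguishable by the bound for more than $d$ maximally entangled states, i.e., your fallback. You should drop the two-copy-subset argument: for instance, two copies of Bell states are LOCC-distinguishable by measuring one copy in the computational basis and the other in the Hadamard basis.

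Your own teleport-and-measure protocol then works. Every outcome $j\neq 0$ identifies the state and leaves $\ket{\psi'_1}$, which is locally equivalent to the catalyst. Only $j=0$ is ambiguous, between two states whose second factors are orthogonal pure states, and these are LOCC-distinguishable \cite{Walgate00}. The catalyst is therefore lost only with probability $2/(d+1)$ under equal priors, which is why ``$d$ sufficiently large'' is the relevant hypothesis.
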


\begin{proof}
We begin the proof by noting that in $d\otimes d$, $d+1$ maximally entangled states cannot be perfectly distinguished by LOCC \cite{Ghosh04, Hayashi06}. Then, we provide the structure of the maximally entangled states in $d\otimes d$, given below:
\begin{equation}\label{eq9}
\ket{\Psi_0} = \ket{\psi_0}\otimes\ket{\psi^\prime_0},~ \ket{\Psi_i} = \ket{\psi_j}\otimes\ket{\psi^\prime_1},    
\end{equation}
where $i=1,\dots,d$ and $j=i-1$, so, $j=0,1,\dots,{d^\prime}^2-1$. The states $\{\ket{\psi_j}\}$, for $j=0,1,\dots,{d^\prime}^2-1$ form an orthogonal maximally entangled basis in $d^\prime\otimes d^\prime$. The states $\ket{\psi^\prime_0}$ and $\ket{\psi^\prime_1}$ have the same form as $\ket{\psi_0}$ and $\ket{\psi_1}$ respectively. The states $\ket{\Psi_i}$ are equally probable $\forall i=0,1,\dots,d$.

Now, to distinguish the states $\{\ket{\Psi_i}\}_{i=0}^d$ locally, a $d^\prime\otimes d^\prime$ maximally entangled state is provided as resource. Using this resource state, we distinguish among the basis states $\{\ket{\psi_j}\}$ without disturbing $\ket{\psi^\prime_0}$ or $\ket{\psi^\prime_1}$ via a teleportation based protocol. If in the discrimination process the states $\{\ket{\psi_j}\}$ are identified for $j=1,\dots,{d^\prime}^2-1$ then corresponding states are $\ket{\Psi_i}$, $i=2,\dots,d$. In this case, the parties are also left with $\ket{\psi^\prime_1}$, a $d^\prime\otimes d^\prime$ maximally entangled state. However, the problem occurs when the identified state in the discrimination process is $\ket{\psi_0}$ as it corresponds to two states $\ket{\Psi_0}$ and $\ket{\Psi_1}$. Here further discrimination can only be done by distinguishing between two orthogonal pure states $\ket{\psi^\prime_0}$ and $\ket{\psi_1^\prime}$ which can be done locally \cite{Walgate00}. In this case, no entangled state is left.

Overall, we find two possibilities. (i) With probability $\frac{2}{d+1}$, we do not complete the state discrimination task, i.e., when we are left with $\ket{\psi_0}$, we stop further discrimination. In this case, either we are left with $\ket{\psi^\prime_0}$ or $\ket{\psi_1^\prime}$, a $d^\prime\otimes d^\prime$ maximally entangled state. Actually, we do not need additional entanglement at all for availing this option. (ii) Otherwise, we can complete the state discrimination task (which is of course our priority here) and in this case, we are left with $\ket{\psi_1^\prime}$ with probability $1-\frac{2}{d+1}$. When $d$ is sufficiently large, the factor $\frac{2}{d+1}$ can be very close to $0$. In that case, we are left with a $d^\prime\otimes d^\prime$ maximally entangled state and corresponding probability can be arbitrary close to $1$. This completes the proof.
\end{proof}

In the context of the above proposition, we mention that the motivation of the aforesaid approximate case has come from embezzling effect \cite{Dam03} (also see \cite{Xing24}). Nevertheless, we note here a very basic difference between what we have found and the results reported in~\cite{Dam03,Xing24}. In particular, in our case we recover the catalyst state with high probability, very close to unity but not exactly unity (with the form of the catalyst state unchanged). On the other hand, in those papers, the recovery of an entangled state is discussed, where the recovered state is very close to the original catalyst state. We also mention it clearly that in our case the actual embezzling effect does not occur because we do not allow any modification of the original entangled state. Even if in a process, we get back a different state, that state must be converted deterministically to the original catalyst state via some local protocol.

We now consider sets that are locally indistinguishable in many-copy scenario. The number of copies is finite here. Such sets can be found in \cite{Bandyopadhyay11, Yu14, Li17, Halder24}. We simply say these sets as many-copy indistinguishable sets. We consider such a set $\{\rho_1, \rho_2\}$. Then, we construct a two-element set $\{\sigma_1, \sigma_2\}$, where $\sigma_1 = \rho_1^{\otimes m}$, $\sigma_2 = \rho_2^{\otimes m}$, and $m$ is the number of copies which is finite. Both states $\sigma_1$ and $\sigma_2$ are equally probable. In fact, one of these states is entangled and the other is not. Now for the set $\{\sigma_1, \sigma_2\}$, we present the following proposition.

\begin{proposition}\label{prop5}
Let us assume that the entangled resource state is $\ket{\xi}$ which is used in the local discrimination of $\{\sigma_1, \sigma_2\}$. If it is possible to get back the state $\ket{\xi}$ on average after the discrimination process, then $\ket{\xi}$ serves as an approximate entanglement catalyst.
\end{proposition}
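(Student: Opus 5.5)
Under the hypothesis of the statement, the task reduces to two checks against definition (iii) of an approximate entanglement catalyst given after (\ref{eq2}). First, $\ket{\xi}$ must actually be needed, i.e., $\{\sigma_1,\sigma_2\}$ must be locally indistinguishable; otherwise $\ket{\xi}$ would be a trivial spectator rather than a catalyst. Second, the recovery condition ``on average'' must match the definition. The plan is therefore as follows.

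(a) Establish local indistinguishability of $\{\sigma_1,\sigma_2\}$. Since $\{\rho_1,\rho_2\}$ is taken from the many-copy indistinguishable ensembles of \cite{Bandyopadhyay11, Yu14, Li17, Halder24}, the pair $\rho_1^{\otimes m},\rho_2^{\otimes m}$ cannot be perfectly distinguished by LOCC for the finite $m$ considered. Hence some entangled resource is required, and $\ket{\xi}$ plays the role of $\tau$ in (\ref{eq1}).

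(b) Place the protocol in the framework of (\ref{eq2}). Let $p$ be the probability that, after a single instance of the $\ket{\xi}$-assisted perfect discrimination, the parties retain $\ket{\xi}$ (or a state locally convertible to it deterministically). The probability may depend on whether $\sigma_1$ or $\sigma_2$ occurred, because one of them is entangled and may contribute entanglement while the other does not. Since the two states are equiprobable, the effective $p$ is the average of the two branch probabilities.

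Over $t$ independent instances run adaptively, the number of recovered copies follows the binomial distribution displayed in (\ref{eq2}). The expected entanglement returned is then $tp\,E(\ket{\xi})$, and the entanglement that can be reinvested is obtained by feeding recovered copies into later instances.

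``Getting back $\ket{\xi}$ on average'' will be formalised as follows: the expected number of copies of $\ket{\xi}$ available after the protocol equals the number consumed. Equivalently, any failure branch is compensated by branches in which extra entanglement is extracted from the entangled member of $\{\sigma_1,\sigma_2\}$, for instance by distilling it after identification, and this extra entanglement is converted back into $\ket{\xi}$ by LOCC. With that formalisation, clause (iii) of the definition, ``we get back $\tau$ on average'', is met verbatim with $\tau=\ket{\xi}$, $0<p\le 1$, $t\ge1$. This proves the claim.

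The main obstacle is to make ``on average'' rigorous without sneaking in an embezzling-type approximation. The issue is that recovered states may differ from $\ket{\xi}$, so one must ensure they are deterministically LOCC-convertible to $\ket{\xi}$, as insisted after Proposition~\ref{prop4}. I would handle this by applying an additive measure (squashed entanglement, as in Theorem~\ref{theo1}) to the average output. Monotonicity on average under LOCC shows that the averaged returned entanglement cannot exceed $E(\ket{\xi})+\tfrac12 E(\sigma_{\rm ent})$, which keeps the catalytic bookkeeping consistent. Given the hypothesis that the average return is achieved, the conclusion then follows directly from the definition.
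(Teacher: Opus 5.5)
Your argument is valid for the statement as literally phrased, but it follows a purely definitional route, whereas the paper's proof gives an explicit protocol.

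\textbf{Where you agree with the paper.} Your step (a) is the paper's first step: local indistinguishability of $\{\sigma_1,\sigma_2\}$ is inherited from the many-copy indistinguishability of $\{\rho_1,\rho_2\}$.

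\textbf{Where you differ.} After step (a), you observe that the hypothesis ``$\ket{\xi}$ is returned on average'' is exactly clause (iii) of the definition, so the conclusion follows. The paper instead shows how the return is achieved:
\begin{itemize}
\item It writes $\sigma_i=\rho_i\otimes\rho_i^{\otimes(m-1)}$.
\item It uses $\ket{\xi}$ to discriminate only the first copy, leaving $\rho_i^{\otimes(m-1)}$ undisturbed.
\item With probability $\tfrac12$ the parties therefore keep $m-1$ copies of the entangled state.
\item If these copies convert deterministically to $\ket{\xi}^{\otimes2}$, additivity gives the average return $\tfrac12E(\ket{\xi}^{\otimes2})=E(\ket{\xi})$.
\end{itemize}

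\textbf{What each route buys.} Yours is more general: any protocol meeting the average-return condition qualifies. But it says nothing about whether or how that condition can be met. The paper's route supplies a working protocol and a concrete sufficient condition. That condition is to obtain two copies of $\ket{\xi}$ from $\rho_1^{\otimes(m-1)}$, which offsets the factor $\tfrac12$.

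\textbf{Where your illustrative bookkeeping misfires.}
\begin{itemize}
\item In the paper's protocol the resource is used up in both branches. The return is two copies with probability $\tfrac12$ and nothing otherwise. So your expression $tpE(\ket{\xi})$, with $p$ a ``retention'' probability, undercounts. Only your later formalisation, the expected number of copies, captures the protocol correctly.
\item ``Distilling the entangled member after identification'' only works if the identification leaves that state intact. This is precisely why the paper discriminates a single copy and never touches the rest.
\item The squashed-entanglement upper bound in your last paragraph is at most a consistency check. It plays no role in establishing the claim.
\end{itemize}
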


\begin{proof}
Notice that the set $\{\sigma_1, \sigma_2\}\equiv\{\rho_1^{\otimes m}, \rho_2^{\otimes m}\}$ is locally indistinguishable. This follows from the fact that $\{\rho_1, \rho_2\}$ is locally indistinguishable even if many copies of the states $\rho_1$ and $\rho_2$ are provided. Here $m$ is finite.

The discrimination process is quite similar to the previous protocols. We write $\{\sigma_1, \sigma_2\}$ as $\{\rho_1\otimes\rho_1^{\otimes m-1}, \rho_2\otimes\rho_2^{\otimes m-1}\}$ and simply distinguish between $\rho_1$ and $\rho_2$ via a local protocol with the help of $\ket{\xi}$. This discrimination process should be done in such a way that the parts $\rho_1^{\otimes m-1}$, $\rho_2^{\otimes m-1}$ are not disturbed. So, after the discrimination process, $m-1$ copies of the states $\rho_1$ or $\rho_2$ are left untouched. Now, if one of the states is entangled, then with probability $\frac{1}{2}$, it is possible to gain $m-1$ copies of the entangled state. If these copies can be locally converted to $\ket{\xi}^{\otimes 2}$ deterministically, then we get back the entanglement on average considering any additive measure of entanglement. In particular, the average entanglement gained is $\frac{1}{2}E(\ket{\xi}^{\otimes 2})$ = $\frac{1}{2}[E(\ket{\xi}) + E(\ket{\xi})]$ = $E(\ket{\xi})$; $E$ is an additive measure of entanglement. This completes the proof.
\end{proof}

The above proposition is sufficiently general as it can be applied to any bipartite or multipartite Hilbert spaces. Furthermore, the above proposition can also be treated as an application of many-copy indistinguishable sets. Let us now take explicit examples which fit into the above proposition. 

\begin{example}\label{ex1}
We assume that $\rho_1 = |\Phi\rangle\langle\Phi|$, $\ket{\Phi} = \ket{00}+\ket{11}+\cdots+\ket{d-1~d-1}$, a two-qudit maximally entangled state. We also assume that $\rho_2 = \mathbb{I}-|\Phi\rangle\langle\Phi|$, $\mathbb{I}$ is the identity operator acting on the two-qudit Hilbert space. Then, we can fix $m\geq3$. For simplicity, we construct $\{\sigma_1, \sigma_2\}\equiv\{\rho_1^{\otimes3}, \rho_2^{\otimes3}\}$ and we fix $\ket{\xi}$ as $\ket{00}+\ket{11}+\cdots+\ket{d-1~d-1}$, a two-qudit maximally entangled state. It is known that $\{\sigma_1, \sigma_2\}$ is locally indistinguishable \cite{Yu14}. Here, the discrimination protocol can simply be a teleportation-based protocol. After discrimination between $\rho_1$ and $\rho_2$ without disturbing the parts ${\rho_1}^{\otimes2}$, ${\rho_2}^{\otimes2}$, the parties are left with $\rho_1^{\otimes2}$ with probability $\frac{1}{2}$. Interestingly, here the number of remaining copies can be quite larger than two when $m>>3$. Still, the resource state $\ket{\xi}$ can be a two-qudit maximally entangled state. This can be regarded as an approximate case of the entanglement battery.
\end{example}

In the above example, the state $\ket{\Phi}$ may not be a maximally entangled state. Still, we can demonstrate catalytic processes. Let us proceed to explain that scenario. We assume $\ket{\Phi}$ = $\alpha_0\ket{00}+\alpha_1\ket{11}+\cdots+\alpha_{d-1}\ket{d-1~d-1}$; $\alpha_i>0$, $\forall i = 0,1,\dots,d-1$, $\alpha_0\geq\alpha_1\geq\cdots\geq\alpha_{d-1}$ and $\alpha_0^2+\alpha_1^2+\cdots+\alpha_{d-1}^2=1$. Then, we consider `$t$' instances of state discrimination problems together, i.e., we consider $\mathcal{S}^{\otimes t}$ $\equiv$ $\{\sigma_1, \sigma_2\}^{\otimes t}$ $\equiv$ $\{\rho_1^{\otimes m}, \rho_2^{\otimes m}\}^{\otimes t}$; $\rho_1 = \ket{\Phi}\bra{\Phi}$, $\rho_2 = \mathbb{I}-\ket{\Phi}\bra{\Phi}$, $\mathbb{I}$ is the identity operator, acting on a two-qudit Hilbert space. In single state discrimination problem, i.e., when $t=1$, we get back $\rho_1^{\otimes m}$ with probability $\frac{1}{2}$. Here the goal is to produce two copies of two-qudit maximally entangled state from $\rho_1^{\otimes m}$ via a local strategy. Corresponding optimal probability can be found from \cite{Vidal99}. We want this probability to be one. So, we apply Nielsen's condition \cite{Nielsen99}. Thus, we get a bound on the entanglement content of $\ket{\Phi}$ by the equation given below:
\begin{equation}\label{eq10}
\alpha_{0} \leq \left(\frac{1}{d}\right)^{\frac{1}{m}}.
\end{equation}
Next, we consider $t>1$. In this case, we consider the distribution given in (\ref{eq2}). We may consider $p=\frac{1}{2}$ and $\ket{\xi^\prime}$ = $\rho_1^{\otimes m}$. We also assume that (\ref{eq10}) is still satisfied. So, the total amount of entanglement which can be extracted via a local strategy is given by $tE(\ket{\xi})$; where $E$ is an additive measure of entanglement. This is also an approximate case of the entanglement battery.

\begin{example}\label{ex2}
We now consider an example of a multi-qubit system. We assume that $\rho_1 = |G\rangle\langle G|$, $\ket{G} = \ket{0}^{\otimes n}+\ket{1}^{\otimes n}$, an $n$-qubit GHZ state. We also assume that $\rho_2 = \mathbb{I}-|G\rangle\langle G|$, $\mathbb{I}$ is the identity operator acting on the $n$-qubit Hilbert space. Then, we can fix $m \geq 3$. For simplicity, we construct $\{\sigma_1, \sigma_2\}\equiv\{\rho_1^{\otimes3}, \rho_2^{\otimes3}\}$. Here we fix $\ket{\xi}$ as $\ket{0}^{\otimes n}+\ket{1}^{\otimes n}$. $\{\rho_1, \rho_2\}$ and thereby $\{\sigma_1, \sigma_2\}$ is locally indistinguishable across every bipartition, according to Ref.~\cite{Yu14}. Here, the discrimination protocol is not a teleportation-based protocol. In fact, here the discrimination protocol can be derived from \cite{Bandyopadhyay18}. Actually, a resource state $\ket{0}^{\otimes n}+\ket{1}^{\otimes n}$ is sufficient to distinguish an $n$-qubit GHZ basis \cite{Bandyopadhyay18}. This is equivalent to the fact that the resource state $\ket{0}^{\otimes n}+\ket{1}^{\otimes n}$ is also sufficient to distinguish $\rho_1$ and $\rho_2$. However, after discrimination, the parties are left with $\rho_1^{\otimes2}$ with probability $\frac{1}{2}$. Here also the number of remaining copies can be quite larger than two. Still, the resource state $\ket{\xi}$ can be an $n$-qubit GHZ state. We note that here we need a measure of multipartite entanglement, which is additive for GHZ states, to demonstrate the catalytic effect. Such a measure can be found in \cite{Rubboli24}.
\end{example}
 
\begin{example}\label{ex3}
So far, in the examples corresponding to Proposition \ref{prop5}, we see that a bipartite (or, multipartite) entangled state is used as a resource and then multiple copies of a bipartite (or, multipartite) entangled state are recovered. However, we are now going to discuss an interesting three-qubit example where we use a bipartite entangled state as a resource but after state discrimination multiple copies of a multipartite entangled state are recovered. This is the best possible case that we have found to demonstrate the approximate multipartite entanglement battery. For this purpose, we consider a three qubit unextendible product basis (UPB)\footnote{Given a set of fully product states, if they span a proper subspace of the considered Hilbert space, such that the complementary subspace is a completely entangled subspace, then the set forms a UPB.} \cite{Bennett99, Bravyi04}. This is given by- $\{\ket{0}\ket{1}\ket{-}, \ket{1}\ket{-}\ket{0}, \ket{-}\ket{0}\ket{1}, \ket{+}\ket{+}\ket{+}\}$, where $\ket{\pm} = \ket{0}\pm\ket{1}$. It is well-known that the complementary subspace corresponding to the space spanned by the states of the UPB is a completely entangled subspace \cite{Bennett99, Parthasarathy04}. Here such an entangled subspace contains the GHZ state $\ket{000}-\ket{111}$, since, it is orthogonal to the states of the UPB. We next assume that $\rho_1 = |G_3\rangle\langle G_3|$, $\ket{G_3} = \ket{000}-\ket{111}$. We also assume that $\rho_2$ is a normalized projection operator onto the subspace spanned by the states of the three-qubit UPB. Then, we fix $m \geq 3$. For simplicity, we construct $\{\sigma_1, \sigma_2\}\equiv\{\rho_1^{\otimes3}, \rho_2^{\otimes3}\}$. $\{\rho_1, \rho_2\}$ is know to be a locally indistinguishable set in many-copy scenario \cite{Bandyopadhyay11} and hence, $\{\sigma_1, \sigma_2\}$ is also locally indistinguishable. Here we fix $\ket{\xi}$ as $\ket{00}+\ket{11}$, a minimum dimensional resource state, sufficient to distinguish $\{\rho_1, \rho_2\}$. This can be understood in the following way. Using the resource state $\ket{\xi}$, it is possible to teleport a qubit to location of another qubit creating a bipartition. Now in that bipartition the given UPB can be extended to a full basis and such a full basis is locally measurable in that bipartition \cite{Bennett99}. This is also equivalent to the local distinguishability of $\{\rho_1, \rho_2\}$ in that bipartition. In this regard, see also Ref.~\cite{Halder24}. However, after discrimination, the parties are left with $\rho_1^{\otimes2}$ with probability $\frac{1}{2}$. Notice that it is never possible to get $\rho_1^{\otimes2}$ from $\ket{\xi}^{\otimes2}$ here, not even probabilistically but the opposite conversion is possible. This is how it is possible to recover the resource state on average.
\end{example}

\section{Discussions}
We have studied here the limitations and advantages of using entanglement batteries and entanglement catalysts in local state discrimination problems. We have shown that it is never possible to distinguish the states of an orthonormal bipartite basis perfectly under LOCC using an (exact) entanglement battery or an (exact) entanglement catalyst if the basis contains only entangled states. Then, we have constructed several nontrivial cases of local state discrimination problems where exact and approximate entanglement batteries or entanglement catalysts can provide advantages. However, there are several differences that we have obtained if we compare local state discrimination problems with the local state transformation problems. For example, a maximally entangled state can be used as a catalyst in local state discrimination problems but this is not true for local state transformations. Furthermore, here a catalyst state does not have any correlation with the rest of the system unlike the local state transformation problems. We also mention that in the cases presented, the exact form of any catalyst state can always be obtained from the finally released entangled state locally.

\textit{Acknowledgement}-- This work was supported by the National Science Centre Poland (Grant No. 2022/46/E/ST2/00115 and 2024/55/B/ST2/01590).

\bibliographystyle{apsrev4-2}
\bibliography{ref}
\end{document}